\documentclass[11pt,a4paper]{article}
\usepackage[T1]{fontenc}
\usepackage[utf8]{inputenc}
\usepackage{lmodern}
\usepackage[margin=2.65cm]{geometry}
\usepackage{amsmath,amssymb,amsthm,mathrsfs}
\usepackage{microtype}
\usepackage{xcolor}
\usepackage[colorlinks=true,linkcolor=blue,citecolor=blue,urlcolor=blue]{hyperref}
\hypersetup{pdftitle={Projective geometry of second-order supersymmetry},pdfauthor={Stevan Cebrian and Mikhail S. Plyushchay}}
\numberwithin{equation}{section}
\newtheorem{proposition}{Proposition}[section]
\newcommand{\Bol}{\operatorname{Bol}}

\title{Projective geometry of second-order supersymmetry}
\author{
Stevan Cebrian and Mikhail S. Plyushchay\\[3pt]
{\small\itshape Departamento de F\'{\i}sica, Universidad de Santiago de Chile,}\\
{\small\itshape Avenida V\'ictor Jara 3493, Santiago, Chile}\\[3pt]
{\small\texttt{\textcolor{blue}{stevan.cebrian@usach.cl,\, mikhail.plyushchay@usach.cl}}}
}
\date{}
\begin{document}
\maketitle
\begin{abstract}

We show that anomaly-free second-order supersymmetry (SUSY) in quantum mechanics admits a projective-geometric formulation in terms of a projective connection and a vector field preserving it. Once the physical coordinate and energy origin are fixed, these data determine the partner Hamiltonians and supercharges;  the field's quadratic invariant fixes the central parameter and candidate singlet energies. 
The fixed-point structure of this field yields explicit criteria for
smooth continuation of the quantum operators through superpotential zeros. 
The classically fictitious similarity transformation
selects affine data whose covariant factors generate the Schwarzian
quantum correction.  
Dimensional reduction of a Riemannian model reproduces these affine 
data through the volume measure, while a coordinate adapted to the
selected field implements coupling-constant metamorphosis. 
A one-gap Lam\'e pair illustrates the relation between the SUSY auxiliary
geometry and the Gelfand--Dikii--Virasoro structure governing physical
solution products. 
Its Lax--Novikov integral acts on Schr\"odinger solutions through
the first-order action of a projective stabilizer;
  in the reflectionless limit it
becomes proportional, in the physical coordinate, to the auxiliary
third Bol operator.

\end{abstract}

\section{Introduction}
\label{sec:intro}
Projective geometry enters physics whenever a choice of coordinate is
defined only up to a fractional-linear transformation. In conformal
mechanics, the \(\mathfrak{sl}(2,\mathbb R)\) generators act in this
way on time, and a change of evolution parameter relates free and
oscillator dynamics \cite{AFF1976,Niederer1973}. The same symmetry
underlies the early superconformal extensions
\cite{AkulovPashnev1983,FubiniRabinovici1984}. In quantum field theory
and gravity, conformal symmetry and reparametrizations play a central
role \cite{BPZ1984,Polyakov1981}. Their projective content is particularly
visible in the Schwarzian dynamics of nearly-AdS$_2$
Jackiw--Teitelboim gravity and the low-energy Sachdev--Ye--Kitaev model
\cite{MaldacenaStanfordYang2016,MaldacenaStanford2016,Mertens2018};
see refs.~\cite{Rosenhaus2019,MertensTuriaci2023} for reviews.
In this paper, the relevant projective coordinate is spatial: it is
the ratio of two independent solutions of an auxiliary second-order
equation.

Supersymmetric quantum mechanics realizes supersymmetry (SUSY) by
relating partner Schr\"odinger Hamiltonians through differential intertwiners
\cite{CooperKhareSukhatme1995,Junker2019}. Higher-order supercharges
can close on a polynomial in the Hamiltonian
\cite{AndrianovIoffeSpiridonov1993,Plyushchay2000}.
The classical construction of this nonlinear SUSY, its quantum anomaly,
and its relation to quasi-exact solvability were developed in
refs.~\cite{KlishevichPlyushchay2001,Plyushchay2004}.
Second order is exceptional: an anomaly-free quantization exists for
an arbitrary local nonvanishing superpotential, with a real central
parameter controlling the possible singlet energies
\cite{KlishevichPlyushchay2001}. Subsequent polynomial-kernel
descriptions exhibit fractional-linear covariance and discriminant
invariants \cite{Tanaka2003,BagchiTanaka2009}.
The Schwarzian quantum correction and the coupling-constant
metamorphosis relating second- and first-order SUSY were identified in
ref.~\cite{Plyushchay2017}. The classical factorization behind that
correction is also central to the position-dependent-mass analysis
of ref.~\cite{BravoPlyushchay2016}.

We identify the geometric data that reconstruct all local differential
operators of this established SUSY family: a projective connection and
one vector field preserving it, with the physical coordinate and energy
origin fixed. The connection determines the auxiliary charge equation; 
the selected field determines the multiplicative factors relating
auxiliary solutions to charge zero modes (the dressing), the difference
between the partner potentials, and the Hamiltonian action on the charge
kernel. Its quadratic invariant fixes the central parameter \(C\).
 Classical solution-product and
Ermakov--Pinney identities underlie this reconstruction
\cite{Pinney1950,Brezhnev2008Integrability}. The resulting correspondence
also gives a regular parametrization at superpotential zeros, where
the usual formulas contain separately divergent terms. Physical
realizations additionally require compatible domains and boundary
conditions.

The geometric formulation has three consequences. It interprets the
classically fictitious factorization as the choice of an affine
connection whose covariant factors produce the Schwarzian correction.

It identifies the candidate singlet energies with the selected field's
conjugacy invariant and gives necessary and sufficient cancellation
conditions at simple and double superpotential zeros.

Finally, it distinguishes auxiliary projective symmetry from physical
integrability. The differential operators expressing projective
covariance are the Bol operators \cite{Bol1949,GustafssonPeetre1989}.
The third Bol operator annihilates the selected field, but it need
not commute with either physical Hamiltonian. The Lam\'e pair and
its reflectionless limit make this distinction explicit, using their
known coexisting SUSYs and Lax--Novikov integrals
\cite{CorreaEtAl2008,CorreaJakubskyPlyushchay2008,PlyushchayArancibiaNieto2011}.
Higher-order bosonic integrals also occur in nonlinear superconformal
mechanics \cite{LeivaPlyushchay2003}, where physical realization
requires care with domains \cite{CorreaDelOlmoPlyushchay2005}.

The comparison uses the distinction between the Gelfand--Dikii (GD)
second Hamiltonian structure of the Korteweg--de Vries (KdV) equation
and its Lax representation \cite{Zuber1993}, together with the
solution-product description of finite-gap integrability
\cite{Brezhnev2008Integrability}.
We relate these physical spectral data to the additional auxiliary
data selected by SUSY. The reflectionless limit gives an explicit
coincidence of the auxiliary Bol expression with the commuting Lax
integral in the physical coordinate.

Section~\ref{sec:classical} fixes the classical system and its
anomaly-free quantization. Section~\ref{sec:factorization} derives
the affine and projective content of the fictitious factorization.
Section~\ref{sec:reconstruction} gives the reconstruction, derives
its spectral and regularity consequences, and explains the distinction
from higher orders. Section~\ref{sec:curved} recovers the connection
by curved reduction and identifies the geometric map behind
metamorphosis. 
Section~\ref{sec:integrable} relates the Lam\'e Lax integral to
the Gelfand--Dikii projective stabilizer and the SUSY auxiliary connection.
Section~\ref{sec:discussion} discusses the results and open questions;
the appendix supplies the operator checks.

We use unit mass, \(\hbar>0\), and \(D_x=d/dx\). The real
superpotential \(W\) is smooth; \(a\) is an energy origin and
\(C\) is real. Square-root formulas are first written on a
\(W>0\) interval; logarithmic derivatives extend to fixed-sign
intervals using \(\sqrt{|W|}\). Differential identities use formal
adjoints in \(L^2(dx)\); physical statements additionally specify
admissible states and compatible domains.

\section{Classical nonlinear SUSY and its quantum correction}
\label{sec:classical}
Let \(\{x,p\}=1\) and \(\{\theta^+,\theta^-\}=-i\), where
the conjugate Grassmann variables obey
\(N=\theta^+\theta^-\), \(N^2=0\). For integer \(n\geq1\),
\begin{equation}
 H_n=\tfrac12(p^2+W^2)+nW'N+a,\qquad
 Q_n^+=2^{-n/2}(W+ip)^n\theta^+,\qquad Q_n^-=(Q_n^+)^*
 \label{eq:classical-holo}
\end{equation}
obey \(\{H_n,Q_n^\pm\}=0\) and
\(\{Q_n^+,Q_n^-\}=-i(H_n-a)^n\)
\cite{KlishevichPlyushchay2001,Plyushchay2004}.
The integer boson--fermion coupling cancels the phase evolution of
\((W+ip)^n\) against that of \(\theta^+\); the remaining terms
vanish by \(N\theta^+=0\). This gives nonlinear SUSY already
at the classical level, without restricting \(W\).

At second order, an additional real constant \(C\) is allowed,
\mbox{yielding~\cite{KlishevichPlyushchay2001,Plyushchay2017}}
\begin{align}
 H_{\rm cl}&=\tfrac12(p^2+W^2-C/W^2)+2W'N+a,\nonumber\\
 Q_{\rm cl}^+&=\tfrac12[(W+ip)^2+C/W^2]\theta^+,\qquad
 \{Q_{\rm cl}^+,Q_{\rm cl}^-\}=-i[(H_{\rm cl}-a)^2+C].
 \label{eq:classical-pair}
\end{align}
The combination \(W^2-C/W^2\) suggests a conformal origin:
for \(W=\omega x\), its bosonic Hamiltonian is
\(\tfrac12(p^2+\omega^2x^2+g/x^2)+a\), with
\(g=-C/\omega^2\), the oscillator form of de Alfaro--Fubini--Furlan
(AFF) conformal mechanics
\cite{AFF1976}. For general \(W\), however, replacing \(x\)
by \(W(x)\) changes the kinetic term. The universal geometric
statement will concern an auxiliary projective symmetry.

We use fermionic Weyl ordering:
\(\widehat N=\tfrac12[\widehat\theta^+,\widehat\theta^-]
=\hbar\sigma_3/2\), with
\(\widehat\theta^\pm=\sqrt{\hbar}(\sigma_1\pm i\sigma_2)/2\).
Quantizing \(p\) as \(-i\hbar D_x\) and \((W+ip)^2\) as
the operator square \((\hbar D_x+W)^2\) then gives 
\begin{equation}
 2(H_\pm^{(0)}-a)=-\hbar^2D_x^2+W^2\pm2\hbar W'-C/W^2,
 \qquad 2q_+^{(0)}=(\hbar D_x+W)^2+C/W^2.
\end{equation}
The required relation \(q_+H_-=H_+q_+\) fails by
\begin{equation}
 q_+^{(0)}H_-^{(0)}-H_+^{(0)}q_+^{(0)}=-\frac{\hbar^3}{4}W'''.
 \label{eq:anomaly}
\end{equation}
This quantum anomaly is distinct from SUSY breaking, which concerns
charge zero modes after the algebra has been established.

Keeping \(W\), the energy origin \(a\), and the coefficients
of \(D_x^2\) and \(D_x\) in the charge fixed, comparison of
derivative coefficients in the intertwining relation requires opposite
scalar corrections,
 \(H_\pm=H_\pm^{(0)}+\delta V/2\) and
\(q_+=q_+^{(0)}-\delta V/2\). The remaining condition is
\(W\delta V'+2W'\delta V=\hbar^2W'''/2\).
Absorbing its integration constant in \(C\) gives
\begin{equation}
 \Delta=\hbar^2\left(\frac{W''}{2W}-\frac{W'^2}{4W^2}\right),
 \qquad U_C=\Delta-C/W^2.
 \label{eq:Delta}
\end{equation}
The resulting established family is
\begin{equation}
 2(H_\pm-a)=-\hbar^2D_x^2+W^2\pm2\hbar W'+U_C,
 \qquad 2q_+=(\hbar D_x+W)^2-U_C,\qquad q_-=q_+^\dagger.
 \label{eq:operators}
\end{equation}
With the matrix operators and grading
\[
 \mathcal H=\begin{pmatrix}H_+&0\\0&H_-\end{pmatrix},\qquad
 \mathcal Q_+=\begin{pmatrix}0&q_+\\0&0\end{pmatrix},\qquad
 \mathcal Q_-=\mathcal Q_+^\dagger,\qquad \Gamma=\sigma_3,
\]
one obtains
\begin{equation}
 [\mathcal H,\mathcal Q_\pm]=0,\qquad \mathcal Q_\pm^2=0,
 \qquad \{\mathcal Q_+,\mathcal Q_-\}=(\mathcal H-a)^2+C\mathbf1.
 \label{eq:superalgebra}
\end{equation}
Appendix~\ref{app:proofs} justifies the correction prescription and
the quadratic closure. We now turn to the geometric information
encoded in \(U_C\).

\section{From classical factorization to projective geometry}
\label{sec:factorization}
\subsection{The fictitious similarity transformation}
The correction in \eqref{eq:Delta} can be generated by a prescription
already formulated at the classical level. For a positive function
\(\zeta(x)\), ordinary multiplication gives
\begin{equation}
 p^2=\zeta p\zeta^{-2}p\zeta
     =(-i\zeta p\zeta^{-1})(i\zeta^{-1}p\zeta).
 \label{eq:classical-insertion}
\end{equation}
This is the \emph{fictitious similarity transformation} of
refs.~\cite{Plyushchay2017,BravoPlyushchay2016}: it leaves the classical
observable and canonical variables unchanged. Retaining the factor order
in quantization produces
\begin{align}
 \mathsf A_\zeta&=\hbar\zeta^{-1}D_x\zeta
                =\hbar(D_x+\gamma_\zeta),\qquad
                  \gamma_\zeta=(\log\zeta)',\nonumber\\
 K_\zeta&=\mathsf A_\zeta^\dagger\mathsf A_\zeta
       =-\hbar^2D_x^2+\hbar^2(\gamma_\zeta^2-\gamma_\zeta').
 \label{eq:Kzeta}
\end{align}
Thus the classically invisible function becomes quantum factorization
data. This is not a similarity of the already quantized free Hamiltonian;
such a common similarity of a Hamiltonian and a charge could not change
whether their commutator vanishes.

For second-order SUSY, choose \(\zeta=W^{-1/2}\). Then
\(K_\zeta=-\hbar^2D_x^2+\Delta\). The same insertion in the
classical charge, with \(Z=W+ip\),
\(Z^2=(\zeta Z\zeta^{-1})(\zeta^{-1}Z\zeta)\),
gives
\begin{equation}
 \left(\hbar D_x+W+\frac{\hbar W'}{2W}\right)
 \left(\hbar D_x+W-\frac{\hbar W'}{2W}\right)
 = (\hbar D_x+W)^2-\Delta.
 \label{eq:charge-factors}
\end{equation}
One prescription therefore supplies \(+\Delta/2\) in both
Hamiltonians and \(-\Delta/2\) in the charge: these are precisely
the opposite corrections required in section~\ref{sec:classical},
at fixed \(C\). The classical \(C/W^2\) term is unaffected. We now identify the geometric information
encoded by these factors.

\subsection{Affine connections and the second Bol operator} 
A basis of local solutions of \(K_\zeta v=0\) is 
\begin{equation}
 v_0=\zeta^{-1},\qquad v_1=\zeta^{-1}\xi_\zeta,
 \qquad \xi_\zeta'=\zeta^2.
\end{equation}
A change of solution basis transforms their ratio \(\xi_\zeta\)
by a fractional-linear map. Such a ratio is a \emph{developing
coordinate}: it represents the local projective structure by a
coordinate on the projective line, with another ratio chart available
where the denominator vanishes. 
In the fixed coordinate \(x\), its basis-independent coefficient
is the Schwarzian derivative, 
\begin{equation}
 \{\xi;x\}=\frac{\xi'''}{\xi'}-\frac32\left(\frac{\xi''}{\xi'}\right)^2,
 \qquad \{(\alpha\xi+\beta)/(\gamma\xi+\delta);x\}=\{\xi;x\}.
 \label{eq:Schwarzian}
\end{equation}
The determinant of the fractional-linear map is nonzero.

For a coordinate change \(x=g(t)\), \(g'>0\), a density of weight
\(\rho\) has coefficients \(v_t=(g')^\rho v_x\circ g\), where
the subscripts label coordinate descriptions, not derivatives. 
Assigning weight \(+1/2\) to \(\zeta\) makes
\(\xi_\zeta'=\zeta^2\) coordinate-compatible. Define
\begin{equation}
 \Gamma_\zeta=2\gamma_\zeta=\frac{\xi_\zeta''}{\xi_\zeta'},\qquad
 \mathcal U_\zeta=\Gamma_\zeta'-\frac12\Gamma_\zeta^2
                 =\{\xi_\zeta;x\}.
 \label{eq:affine}
\end{equation}
They transform as
\begin{equation}
 \Gamma_{\zeta,t}=g'\Gamma_\zeta\circ g+\frac{g''}{g'},\qquad
 \mathcal U_{\zeta,t}=(g')^2\mathcal U_\zeta\circ g+\{g;t\}.
 \label{eq:connection-laws}
\end{equation}
The first law defines an affine connection and the second a projective
connection. Their relation in \eqref{eq:affine} is a Riccati, or
Miura-type, relation. The covariant derivative on weight \(\rho\) 
is \(D_x-\rho\Gamma_\zeta\). In the fixed coordinate \(x\),
the corresponding differential expressions satisfy 
\begin{equation}
 \left(D_x-\frac12\Gamma_\zeta\right)
 \left(D_x+\frac12\Gamma_\zeta\right)
 =D_x^2+\frac12\mathcal U_\zeta,
 \qquad K_\zeta=-\hbar^2\Bol_2^{\mathcal U_\zeta}.
 \label{eq:Bol2}
\end{equation}
Here \(\Bol_2^{\mathcal U}=D_x^2+\mathcal U/2\) is the second Bol
operator. More generally, the \(n\)th Bol operator maps densities
of weight \((1-n)/2\) to weight \((1+n)/2\); in a projective
coordinate, where the projective coefficient vanishes, it is the
ordinary \(n\)th derivative \cite{Bol1949,GustafssonPeetre1989}. 
Intrinsically, \(\Bol_2^{\mathcal U}\) maps weight \(-1/2\)
to weight \(3/2\); its identification with the scalar expression
\(-K_\zeta/\hbar^2\) in \eqref{eq:Bol2} uses their coefficients
in the chosen coordinate. 
The covariance of this map follows from
\eqref{eq:connection-laws} \cite{OvsienkoTabachnikov2005,Ovsienko2006}.

The fictitious insertion thus selects an affine connection and a
\emph{solution line}, the one-dimensional subspace
\(\mathbb R\zeta^{-1}\) of the local kernel. The associated
covariant factors produce the projective operator.
 Different solution lines can give
the same second-order operator: if \(\widehat\xi\) is a M\"obius
transform of \(\xi_\zeta\) and \(\widehat\zeta^2=\widehat\xi'\)
on an orientation-preserving chart, then
\(K_{\widehat\zeta}=K_\zeta\). A general change of \(\zeta\)
instead changes the quantization. The classical identity alone does
not fix this quantum information. The choice \(\zeta=W^{-1/2}\)
reproduces the correction required by SUSY, but is not a unique
factorization of the resulting operator.

\subsection{The complete auxiliary connection} 
For this representative, on a \(W>0\) interval write
\(R=\sqrt W\), \(\chi'=W\), and 
\begin{equation}
 z'=W^{-1},\qquad \Delta=-\frac{\hbar^2}{2}\{z;x\},\qquad
 \mathcal L_C=-\hbar^2D_x^2+U_C,
 \qquad \mathcal U_C=-\frac{2U_C}{\hbar^2}.
 \label{eq:auxiliary}
\end{equation}
Since \(\chi'=W\),
\(e^{\chi/\hbar}(\hbar D_x+W)e^{-\chi/\hbar}=\hbar D_x\).
This conjugation removes \(W\) from the first-order factor and hence
the first-derivative term from the second-order charge. Equivalently,
\begin{equation}
 2q_+=-e^{-\chi/\hbar}\mathcal L_Ce^{\chi/\hbar},\qquad
 2q_-=-e^{\chi/\hbar}\mathcal L_Ce^{-\chi/\hbar}.
 \label{eq:charge-gauge}
\end{equation}
The auxiliary equation \(\mathcal L_C\psi=0\) is in Liouville
normal form, with constant leading coefficient and no first derivative.
Its potential \(U_C\) is the common correction in \eqref{eq:operators};
the complete common potential is \(W^2+U_C\).

Let \(j=z'>0\). If \(\phi\) solves the reference equation
\((-\hbar^2D_z^2+V_{\rm ref})\phi=0\), then
\(\psi(x)=j^{-1/2}\phi(z(x))\) solves
\((-\hbar^2D_x^2+V(x))\psi=0\), with 
\begin{equation}
 V(x)=j^2V_{\rm ref}(z(x))-\frac{\hbar^2}{2}\{z;x\}.
 \label{eq:Liouville}
\end{equation}
Taking \(V_{\rm ref}=-C\) yields
\begin{equation}
 \mathcal L_C=j^{3/2}(-\hbar^2D_z^2-C)j^{1/2},
 \qquad \psi=R\phi(z).
 \label{eq:transport}
\end{equation}
Here pullback by \(z=z(x)\) is understood in the mixed-coordinate
operator expression. The two contributions to \(U_C\) are the transported
constant and the Schwarzian of the coordinate change. This is a map
of stationary equations between density weights, not a unitary
equivalence of physical Hamiltonians, as discussed in
section~\ref{sec:stationary-metamorphosis}.

For two independent reference solutions, put
\(f_{\rm ref}=\phi_1/\phi_2\). The full auxiliary solution ratio is
\(f(x)=f_{\rm ref}(z(x))\). The Schwarzian chain rule gives
\begin{equation}
 \mathcal U_C=\{f;x\}=\frac{2C}{\hbar^2W^2}+\{z;x\},
 \qquad U_C=-\frac{\hbar^2}{2}\{f;x\}.
 \label{eq:full-ratio}
\end{equation}
One may take \(f_{\rm ref}=z\) for \(C=0\),
\(e^{2\beta z/\hbar}\) for \(C=-\beta^2<0\), \(\beta>0\), or
\(\tan(\sqrt C\,z/\hbar)\) for \(C>0\). 
Thus \(z\) and \(f\) serve different purposes: \(z\) makes
the selected vector field a translation, whereas \(f\) is a
developing coordinate of the complete auxiliary equation. They can
coincide for \(C=0\). 
The Schwarzian coefficient in \eqref{eq:Liouville} is fixed by the
density weights. The value of \(C\) is additional SUSY data, not a
constant fixed by projective covariance alone.

\section{Projective reconstruction and its consequences}
\label{sec:reconstruction}

\subsection{The selected projective generator}
Let \(\psi_1,\psi_2\) be a real fundamental pair for
\(\psi''+\frac{1}{2}\mathcal U\psi=0\), with constant nonzero Wronskian
\(\varpi=\psi_1\psi_2'-\psi_1'\psi_2\). Their products span
\begin{equation}
 \ker_{\rm loc}\Bol_3^{\mathcal U}
   =\operatorname{span}_{\mathbb R}\{\psi_1^2,\psi_1\psi_2,\psi_2^2\},
 \qquad \Bol_3^{\mathcal U}=D_x^3+2\mathcal U D_x+\mathcal U'.
 \label{eq:Bol3}
\end{equation}
This is the third Bol operator. The product space is the
\emph{symmetric square} of the two-dimensional solution space.
The products have density weight \(-1\), appropriate to coefficients
of vector fields. With \(f=\psi_1/\psi_2\),
\(f'=-\varpi/\psi_2^2\), the fields
\begin{equation}
 -\frac{\psi_2^2}{\varpi}\partial_x=\partial_f,\qquad
 -\frac{\psi_1\psi_2}{\varpi}\partial_x=f\partial_f,\qquad
 -\frac{\psi_1^2}{\varpi}\partial_x=f^2\partial_f
 \label{eq:projective-fields}
\end{equation}
generate \(\mathfrak{sl}(2,\mathbb R)\).
For \(g_\epsilon(x)=x+\epsilon\xi(x)\), the infinitesimal form of
\eqref{eq:connection-laws} is
\begin{equation}
 \delta_\xi\mathcal U=\xi\mathcal U'+2\xi'\mathcal U+\xi'''
                     =\Bol_3^{\mathcal U}\xi.
 \label{eq:variation}
\end{equation}
An element of this kernel is a \emph{projective stabilizer}: its
vector field preserves the connection. This local symmetry need not
give a conserved conformal charge of the physical Hamiltonian.

Specialize now to \(\mathcal U=\mathcal U_C\), and use
\(\psi_1,\psi_2\) for a fundamental pair of this auxiliary equation.
Equation~\eqref{eq:Delta} implies 
\begin{equation}
 I[W,\mathcal U_C]:=2WW''-W'^2+2\mathcal U_CW^2
                   =\frac{4C}{\hbar^2}.
 \label{eq:invariant}
\end{equation} 
Since \(I'=2W\Bol_3^{\mathcal U_C}W\), constancy of \(I\)
implies the stabilizer equation on every interval where \(W\ne0\).
Conversely, a stabilizer makes \(I\) constant. 
Equation~\eqref{eq:Bol3} therefore gives
\begin{equation}
 W=M_{11}\psi_1^2+2M_{12}\psi_1\psi_2+M_{22}\psi_2^2,
 \qquad C=\hbar^2\det(\mathsf M)\varpi^2,
 \label{eq:quadratic-form}
\end{equation}
where \(\mathsf M\) is a constant symmetric matrix; substitution in
\eqref{eq:invariant} proves the determinant formula. 
For \(W=R^2>0\), the same identity reads
\(R''+\mathcal U_CR/2=C/(\hbar^2R^3)\), the Ermakov--Pinney
equation. Its positive solution \(R\) is the square root of the
quadratic form in \eqref{eq:quadratic-form}
\cite{Pinney1950}.

The invariant \(C\) characterizes the connection together with the
selected stabilizer \(W\partial_x\). A change of auxiliary solution
basis changes \(\mathsf M\) and the Wronskian \(\varpi\), but leaves
the combination \(\det(\mathsf M)\varpi^2\), and hence \(C\),
unchanged. By contrast, choosing a different stabilizer at fixed
connection generally changes both the physical system and \(C\).

In the developing coordinate \(f=\psi_1/\psi_2\), the selected
vector field becomes
\begin{equation}
 W\partial_x=P(f)\partial_f,\qquad
 P(f)=-\varpi\bigl(M_{22}+2M_{12}f+M_{11}f^2\bigr).
\end{equation}
The quadratic polynomial \(P\) generates a flow of M\"obius
transformations. To identify its matrix generator, write
\[
 f_s=\frac{a_s f+b_s}{c_s f+d_s},\qquad
 \begin{pmatrix}a_s&b_s\\c_s&d_s\end{pmatrix}
 =e^{s\mathsf T_W}\in SL(2,\mathbb R),
 \qquad f_0=f,
\]
where \(s\) is the flow parameter and \(\mathsf T_W\) is a constant
traceless matrix. Differentiating at \(s=0\) gives
\[
 \left.\frac{df_s}{ds}\right|_{s=0}
 =(\mathsf T_W)_{12}
   +2(\mathsf T_W)_{11}f
   -(\mathsf T_W)_{21}f^2.
\]
Matching this expression with \(P(f)\) determines the generator
directly in terms of the quadratic-form coefficients:
\begin{equation}
 \mathsf T_W
 =\varpi
 \begin{pmatrix}
  -M_{12}&-M_{22}\\
   M_{11}& M_{12}
 \end{pmatrix},
 \qquad
 \det\mathsf T_W=\frac{C}{\hbar^2},
 \qquad
 \mathsf T_W^2=-\frac{C}{\hbar^2}\mathbf1.
 \label{eq:matrix-invariant}
\end{equation}
Thus \(\mathsf M\) expresses \(W\) as a quadratic form in the
auxiliary solutions, while \(\mathsf T_W\) represents the same
vector field as a generator of projective transformations.
The determinant identity follows from
\(\det\mathsf T_W=\varpi^2\det\mathsf M\); the identity for
\(\mathsf T_W^2\) follows from its vanishing trace.

For a nonzero selected field, \(C<0\), \(C=0\), and \(C>0\)
therefore correspond to hyperbolic, parabolic, and elliptic
projective flows, respectively. These have two distinct real
fixed points, one double real fixed point, or no real fixed
point on the projective line. A fixed point may lie at infinity.
Here \(C/\hbar^2\) is the determinant invariant of the selected
generator, rather than a Casimir eigenvalue of a representation.

On an interval where \(W\ne0\), the adapted coordinate satisfies 
\begin{equation}
 dz=\frac{dx}{W}=\frac{df}{P(f)},\qquad W\partial_x=\partial_z,
 \qquad \{f;z\}=PP''-\frac12(P')^2=\frac{2C}{\hbar^2},
 \label{eq:adapted}
\end{equation} 
where the derivatives of \(P\) are with respect to \(f\).
In \(z\), the flow introduced above is simply \(z\mapsto z+s\).
This straightening of a quadratic projective field is also used 
in AFF time reparametrization, \(d\tau=dt/(u+vt+wt^2)\)
\cite{AFF1976,InzunzaPlyushchay2019,AlPlyu}.
The discriminant classifies the flow in both settings. Here its
invariant fixes the charge-kernel energies, and \(z\) is spatial.

\subsection{Local reconstruction}
The Lie derivative of a weight-\(-1/2\) density along
\(W\partial_x\) is
\begin{equation}
 \mathfrak D_W=WD_x-\frac12W'.
 \label{eq:DW}
\end{equation} 
In the auxiliary basis used above, the same matrix \(\mathsf T_W\)
describes its componentwise action:
\[
 \mathfrak D_W
 \begin{pmatrix}\psi_1\\\psi_2\end{pmatrix}
 =\mathsf T_W
 \begin{pmatrix}\psi_1\\\psi_2\end{pmatrix}.
\]
Thus the projective flow of the ratio and the action on auxiliary
solutions are two realizations of the same selected symmetry. 
For \(\mathcal L=-\hbar^2(D_x^2+\mathcal U/2)\), covariance
under a stabilizing field, \(\Bol_3^{\mathcal U}W=0\), takes the
intrinsic form
\[
 \mathcal L\,\mathfrak D_W^{(-1/2)}
 =\mathfrak D_W^{(3/2)}\,\mathcal L,\qquad
 \mathfrak D_W^{(\rho)}=WD_x+\rho W'.
\]
The Lie derivatives act on the input and output density weights,
respectively. Since
\(\mathfrak D_W^{(3/2)}=\mathfrak D_W+2W'\), this is equivalent
to the first identity below in the fixed coordinate; direct
multiplication also gives the second: 
\begin{equation}
 [\mathcal L,\mathfrak D_W]=2W'\mathcal L,\qquad
 \hbar^2\mathfrak D_W^2+W^2\mathcal L=-C\mathbf1,
 \quad C=\frac{\hbar^2}{4}I[W,\mathcal U].
 \label{eq:generator-identities}
\end{equation} 
If \(\mathcal L\psi=0\), the first identity gives
\(\mathcal L\mathfrak D_W\psi=0\); the second then gives
\(\hbar^2\mathfrak D_W^2\psi=-C\psi\).
These are statements on the auxiliary solution space, not commutation
of \(\mathfrak D_W\) with \(\mathcal L\) on arbitrary functions.

In the reconstruction below, sums such as
\(\mathcal L\pm2\hbar\mathfrak D_W\) use the scalar coefficient
expressions in the fixed physical coordinate \(x\):
\(\mathcal L\) changes density weight, whereas \(\mathfrak D_W\)
preserves it. This coordinate identification yields the
constant-mass Schr\"odinger expressions. General reparametrizations
need not preserve that kinetic normalization.
 
\begin{proposition}[Local projective reconstruction]
\label{prop:reconstruction}
Fix \(\hbar>0\), an energy origin \(a\), and a physical coordinate
\(x\). Let \(\mathcal U(x)\) be a smooth real projective coefficient on an interval
and \(W\) a nontrivial real solution of
\(\Bol_3^{\mathcal U}W=0\). Define
\(C=\hbar^2I[W,\mathcal U]/4\) and choose \(\chi'=W\).
Then \(C\) is constant, and
\begin{align}
 2(H_--a)&=e^{-\chi/\hbar}
        (\mathcal L+2\hbar\mathfrak D_W)e^{\chi/\hbar},\nonumber\\
 2(H_+-a)&=e^{\chi/\hbar}
        (\mathcal L-2\hbar\mathfrak D_W)e^{-\chi/\hbar},
 \label{eq:H-reconstruction}\\
 2q_+&=-e^{-\chi/\hbar}\mathcal Le^{\chi/\hbar},\qquad
 2q_-=-e^{\chi/\hbar}\mathcal Le^{-\chi/\hbar}
 \label{eq:q-reconstruction}
\end{align}
satisfy \eqref{eq:superalgebra} as local differential expressions.
Where \(W\ne0\), they coincide with \eqref{eq:operators}.
Conversely, every member of that family determines
\((\mathcal U_C,W\partial_x)\).
\end{proposition}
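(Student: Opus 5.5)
The plan is to reduce every claim to the two generator identities \eqref{eq:generator-identities} together with the conjugation rule \(e^{\mp\chi/\hbar}\hbar D_x e^{\pm\chi/\hbar}=\hbar D_x\pm W\).

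\textbf{Constancy of \(C\).} This follows from \(I'=2W\Bol_3^{\mathcal U}W\). That identity is a direct differentiation of \(I=2WW''-W'^2+2\mathcal UW^2\):
\[
I'=2WW'''+4\mathcal UWW'+2\mathcal U'W^2=2W\bigl(W'''+2\mathcal UW'+\mathcal U'W\bigr).
\]
The computation holds everywhere, including at zeros of \(W\), so no restriction to \(W\ne0\) is needed.

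\textbf{Intertwining.} Put \(\widetilde H_\pm=2(H_\pm-a)\) and \(\widetilde q_+=2q_+\). Then \(\widetilde q_+\widetilde H_-=\widetilde H_+\widetilde q_+\) becomes, after removing the outer conjugation by \(e^{-\chi/\hbar}\),
\[
\mathcal L(\mathcal L+2\hbar\mathfrak D_W)=e^{2\chi/\hbar}(\mathcal L-2\hbar\mathfrak D_W)e^{-2\chi/\hbar}\mathcal L .
\]
Conjugating by \(e^{2\chi/\hbar}\) sends \(\hbar D_x\mapsto\hbar D_x-2W\). Hence
\[
e^{2\chi/\hbar}\mathcal Le^{-2\chi/\hbar}=\mathcal L+2\hbar(2WD_x+W')-4W^2,\qquad
e^{2\chi/\hbar}\mathfrak D_We^{-2\chi/\hbar}=\mathfrak D_W-\tfrac{2}{\hbar}W^2 .
\]
Combining these, the right-hand side becomes \((\mathcal L+2\hbar\mathfrak D_W+2\hbar W'\cdot 2\,\text{-type terms})\mathcal L\). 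Written carefully, this is
\[
\bigl(\mathcal L+4\hbar WD_x+2\hbar W'-4W^2-2\hbar\mathfrak D_W+4W^2\bigr)\mathcal L
=(\mathcal L+2\hbar\mathfrak D_W+4\hbar W')\mathcal L .
\]
The required identity therefore reduces to \(2\hbar[\mathcal L,\mathfrak D_W]=4\hbar W'\mathcal L\). This is exactly the first identity in \eqref{eq:generator-identities}, which rests on \(\Bol_3^{\mathcal U}W=0\). The coefficient bookkeeping here is the step I expect to be most error-prone, so I would verify it symbol by symbol.

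The relations \(\mathcal Q_\pm^2=0\) and \([\mathcal H,\mathcal Q_-]=0\) then follow by nilpotency of the matrix structure and by formal adjoints. For the adjoints, note that \(\mathfrak D_W^\dagger=-\mathfrak D_W\) and \(\mathcal L^\dagger=\mathcal L\), so \(H_\pm\) are formally self-adjoint.

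\textbf{Quadratic closure.} The diagonal entries of \(\{\mathcal Q_+,\mathcal Q_-\}\) are
\[
4q_+q_-=e^{-\chi/\hbar}\mathcal Le^{2\chi/\hbar}\mathcal Le^{-\chi/\hbar},\qquad
4q_-q_+=e^{\chi/\hbar}\mathcal Le^{-2\chi/\hbar}\mathcal Le^{\chi/\hbar}.
\]
I need to show \(4q_-q_+=\widetilde H_-^2+4C\), and similarly for \(q_+q_-\). Conjugating back, the claim becomes
\[
\mathcal L\,e^{-2\chi/\hbar}\mathcal L\,e^{2\chi/\hbar}=(\mathcal L+2\hbar\mathfrak D_W)^2+4C .
\]
To prove it, expand \(e^{-2\chi/\hbar}\mathcal Le^{2\chi/\hbar}=\mathcal L-4\hbar WD_x-2\hbar W'-4W^2\), which equals \(\mathcal L-4\hbar\mathfrak D_W-4\hbar W'-4W^2\). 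Then expand both sides. The commutator identity removes the cross terms, leaving \(-4(\hbar^2\mathfrak D_W^2+W^2\mathcal L)\) in the form of a scalar difference, which equals \(4C\) by the second identity of \eqref{eq:generator-identities}. I expect the main obstacle to be matching operator orderings such as \(\mathcal LW^2\) versus \(W^2\mathcal L\) in this expansion. I would handle it by moving all \(\mathcal L\) factors to the right using \([\mathcal L,\mathfrak D_W]=2W'\mathcal L\) together with \([\mathcal L,W^2]\), and if the leftovers do not cancel cleanly, fall back to checking on the three-dimensional space of functions of the form \(e^{\chi/\hbar}\psi\).

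\textbf{Coincidence and converse.} Where \(W\ne0\), take \(\mathcal U=\mathcal U_C=-2U_C/\hbar^2\). Expanding the conjugations gives
\[
e^{-\chi/\hbar}(\mathcal L+2\hbar\mathfrak D_W)e^{\chi/\hbar}=-\hbar^2D_x^2+W^2+\hbar W'+U_C+2\hbar WD_x-\hbar W'+\dots,
\]
and collecting terms yields \eqref{eq:operators}, consistent with \eqref{eq:charge-gauge}. For the converse, given a member \((W,C)\) of the family, set \(\mathcal U_C=-2U_C/\hbar^2\). Equation \eqref{eq:invariant} shows that \(I\) is constant and equal to \(4C/\hbar^2\), so \(\Bol_3^{\mathcal U_C}W=0\) on \(W\ne0\) intervals. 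The pair \((\mathcal U_C,W\partial_x)\) is thus recovered from \(q_+\), since the first-order coefficient of \(2q_+\) gives \(W\) and its scalar term gives \(U_C\).
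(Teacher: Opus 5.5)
Your route differs from the paper's, and in principle it is better. The paper checks coincidence with \eqref{eq:operators} where \(W\ne0\), takes intertwining and closure from the factorization in appendix~\ref{app:closure} (proved for \(C<0\) and extended polynomially in \(C\)), and then continues through the isolated zeros of \(W\) by continuity. Working directly from \eqref{eq:generator-identities}, which are polynomial in \(W,\mathcal U\) and their derivatives, avoids the singular \(1/W\) factors, the extension in \(C\), and the continuity step. Your intertwining computation is correct.

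\textbf{The closure step fails as written.} Multiply \(4q_-q_+=e^{\chi/\hbar}\mathcal Le^{-2\chi/\hbar}\mathcal Le^{\chi/\hbar}\) and \(\widetilde H_-^2=e^{-\chi/\hbar}(\mathcal L+2\hbar\mathfrak D_W)^2e^{\chi/\hbar}\) by \(e^{\chi/\hbar}\) on the left and \(e^{-\chi/\hbar}\) on the right. This gives
\[
 e^{2\chi/\hbar}\mathcal Le^{-2\chi/\hbar}\,\mathcal L=(\mathcal L+2\hbar\mathfrak D_W)^2+4C,
\]
not your identity with \(\mathcal L\,e^{-2\chi/\hbar}\mathcal Le^{2\chi/\hbar}\) on the left. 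Your version is false. The \(D_x^3\) coefficient of its left side is \(+4\hbar^3W\), and that of its right side is \(-4\hbar^3W\), so no reordering will make the leftovers cancel.

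The fallback does not rescue this. Both sides have leading term \(\hbar^4D_x^4\), so their difference has order at most three. A nonzero operator of that order can annihilate a three-dimensional space, so checking agreement on a two- or three-dimensional space proves nothing.

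The repair uses a formula you already have from the intertwining step:
\[
 e^{2\chi/\hbar}\mathcal Le^{-2\chi/\hbar}=\mathcal L+4\hbar\mathfrak D_W+4\hbar W'-4W^2 .
\]
The left side is then \(\mathcal L^2+4\hbar\mathfrak D_W\mathcal L+4\hbar W'\mathcal L-4W^2\mathcal L\). The right side reduces to the same expression using \(\mathcal L\mathfrak D_W=\mathfrak D_W\mathcal L+2W'\mathcal L\) and \(\hbar^2\mathfrak D_W^2+C=-W^2\mathcal L\).

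Your expansion of \(e^{-2\chi/\hbar}\mathcal Le^{2\chi/\hbar}\) belongs to the other sector. The identity there is \(e^{-2\chi/\hbar}\mathcal Le^{2\chi/\hbar}\,\mathcal L=(\mathcal L-2\hbar\mathfrak D_W)^2+4C\), which is \(4q_+q_-=\widetilde H_+^2+4C\). It also follows from the first identity by \(W\mapsto-W\).

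\textbf{A smaller error in the adjoint step.} In fact \(\mathfrak D_W^\dagger=-WD_x-\tfrac32W'=-(\mathfrak D_W+2W')\), not \(-\mathfrak D_W\). The anti-Hermitian first-order operator is the weight-\(1/2\) Lie derivative \(WD_x+\tfrac12W'\). Formal self-adjointness of \(H_\pm\) still holds, for a different reason. Expanding the conjugations gives \(2(H_\pm-a)=\mathcal L+W^2\pm2\hbar W'\) on the whole interval, which is manifestly formally self-adjoint. Then \([\mathcal H,\mathcal Q_-]=0\) follows from \([\mathcal H,\mathcal Q_+]=0\) by taking adjoints.

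This expansion is also the clean form of the coincidence step, where your display is garbled. Solving \eqref{eq:invariant} for \(\mathcal U\) where \(W\ne0\) gives \(-\hbar^2\mathcal U/2=\Delta-C/W^2\). This identification is forced, not a choice, and it yields \eqref{eq:operators}.
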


\begin{proof}
The stabilizer equation gives \(I'=0\). For \(W\ne0\), solving
the invariant relation for \(-\hbar^2\mathcal U/2\) gives
\(\Delta-C/W^2\). Conjugation in \eqref{eq:H-reconstruction}
cancels the first derivative and yields
\(-\hbar^2D_x^2+W^2\mp2\hbar W'-\hbar^2\mathcal U/2\).
Equation~\eqref{eq:q-reconstruction} yields the charge and its formal
adjoint. Intertwining and closure follow from the factorization in
appendix~\ref{app:closure}. A nontrivial solution of the regular
third-order stabilizer equation cannot have \(W=W'=W''=0\) at a
point. Its zeros are isolated and of order at most two. The
reconstructed expressions in \(\mathcal U,W\) are smooth, so the
identities extend through these zeros by continuity. The converse
follows from \eqref{eq:invariant} and its derivative.
\end{proof}

The proposition is a local equivalence with the established family, 
including its smooth continuation through allowed zeros of \(W\).
Self-adjoint Hamiltonian domains and mutually adjoint charge domains
are additional physical data. 
At fixed \(\mathcal U\), selecting different stabilizers generally
changes the dressing, the difference between the partner potentials,
and the charge-kernel energies.
 These data therefore specify more than an auxiliary
second-order equation. They also replace the potentially singular
combination \(\Delta-C/W^2\) by the smooth coefficient
\(-\hbar^2\mathcal U/2\); section~\ref{sec:zeros} gives the resulting
regularity criterion for a prescribed \(W\).

The physical stationary equation has its own projective coefficient.
For a ratio \(f_{\pm,E}\) of two solutions of
\((H_\pm-E)\Psi=0\),
\begin{equation}
 \{f_{\pm,E};x\}
 =\mathcal U_C-\frac{2W^2}{\hbar^2}
       \mp\frac{4W'}\hbar+\frac{4(E-a)}{\hbar^2}.
 \label{eq:physical-connection}
\end{equation}
The reconstruction relates both physical spectral families to a
single auxiliary connection and its selected stabilizer.

Reversing \(W\) preserves \(\mathcal U_C,C\) and exchanges
the partners and charges; a simultaneous basis change and inverse
congruence of \(\mathsf M\) leaves the operators unchanged.

The relation to earlier differential parametrizations is explicit.
In a regular developing chart set \(\Gamma_f=f''/f'\), so that
\(\mathcal U=\Gamma_f'-\Gamma_f^2/2\). Then
\begin{equation}
 (D_x-\Gamma_f)D_x(D_x+\Gamma_f)W
   =\Bol_3^{\mathcal U}W=0.
 \label{eq:earlier-constraint}
\end{equation}
This is the third-order intertwining constraint of
refs.~\cite{Tanaka2003,BagchiTanaka2009}, expressed as preservation of
a projective connection. Their second-order Hamiltonians and charges
reduce to \eqref{eq:operators} after matching normalizations and the
energy origin.\footnote{In units \(\hbar=1\), the notation of
ref.~\cite{BagchiTanaka2009} is related to ours by
\(E(x)=\Gamma_f(x)\), \(R=-a\), and \(P_2^-=2q_+\).
The constraint is their Eq.~(12), and the operator expressions are
Eqs.~(39)--(40).} 
Likewise, the discriminant of \(P\) is
\(\operatorname{disc}P=4\varpi^2(M_{12}^2-M_{11}M_{22})
=-4C/\hbar^2\), the polynomial invariant already present in
that description. 
The reconstruction identifies the projective meaning of these
differential constraints and connects their invariant to the
Hamiltonian action, regularity and Bol--Lax comparison below.

\subsection{Projective classes and admissible singlets}
\label{sec:phases} 
On the auxiliary solution space, \eqref{eq:generator-identities}
gives \(\hbar^2\mathfrak D_W^2=-C\). By
\eqref{eq:charge-gauge}, multiplication by \(e^{-\chi/\hbar}\)
maps this space to \(\ker q_+\), while multiplication by
\(e^{\chi/\hbar}\) maps it to \(\ker q_-\).
On these charge kernels the Hamiltonians act as 
\begin{equation}
 (H_--a)e^{-\chi/\hbar}\psi
 =\hbar e^{-\chi/\hbar}\mathfrak D_W\psi,\qquad
 (H_+-a)e^{\chi/\hbar}\psi
 =-\hbar e^{\chi/\hbar}\mathfrak D_W\psi.
 \label{eq:kernel-action}
\end{equation} 
For the explicit square-root formulas below, work on a \(W>0\)
interval. For \(C=-\beta^2<0\), the auxiliary eigenvectors 
\(\psi_\sigma=\sqrt W e^{\sigma\beta z/\hbar}\),
\(\sigma=\pm1\), yield charge-kernel eigenfunctions at
\(a\pm\beta\). For \(C>0\) the two factorization energies
are complex conjugates. At \(C=0\), the generator is nilpotent:
with \(\Phi_0=\sqrt W e^{-\chi/\hbar}\) and
\(\Phi_1=z\Phi_0\),
\begin{equation}
 (H_--a)\Phi_0=0,\qquad (H_--a)\Phi_1=\hbar\Phi_0,
 \qquad q_+\Phi_{0,1}=0.
 \label{eq:confluent}
\end{equation}
Thus \(\Phi_1\) is an associated function, not a second
eigenfunction. Both cannot belong to the same self-adjoint domain:
their inner product would imply \(0=\hbar\|\Phi_0\|^2\).
The three alternatives are
\begin{center}
\begin{tabular}{lll}
\hline
Selected field & Real projective fixed points & Candidate singlet energies\\
\hline
Hyperbolic, \(C=-\beta^2<0\) & Two simple & \(a-\beta,\ a+\beta\)\\
Parabolic, \(C=0\) & One double & \(a\)\\
Elliptic, \(C>0\) & None & None\\
\hline
\end{tabular}
\end{center}
The energy alternatives were established in
ref.~\cite{KlishevichPlyushchay2001}; here they are the conjugacy
classes of the selected projective field.

To decide whether SUSY is unbroken, a charge zero mode must be
admissible in the specified spectral realization. For bound states
this requires square integrability. For periodic systems, Bloch
states obey \(\Psi(x+L)=e^{i\vartheta}\Psi(x)\), including the
corresponding derivative conditions, and are normalized on a cell.
A fixed \(\vartheta\) specifies a \emph{Bloch fiber}. 
In a regular finite-gap scalar system, the lowest band edge and
the endpoints of open finite gaps are nondegenerate spectral singlets.
Their states are periodic or antiperiodic, with Hill discriminant
\(+2\) or \(-2\), respectively; the discriminant is the trace of
the one-period translation matrix on the solution space. 
Away from closed-gap double points, an interior energy has two
independent Bloch solutions with distinct multipliers
\(e^{\pm i\vartheta}\). Their quasi-momenta \(\pm\vartheta/L\),
defined modulo \(2\pi/L\), label different fibers. At a
closed-gap double point, where a forbidden gap has collapsed,
the one-period translation matrix is \(+\mathbf1\) or
\(-\mathbf1\). Both solutions are then periodic or both are
antiperiodic: they share the same quasi-momentum modulo \(2\pi/L\)
and belong to the same fiber. These double points are distinct
from the nondegenerate endpoints of open gaps
\cite{CorreaJakubskyPlyushchay2008}. 
A spectral singlet is a SUSY singlet of a specified algebra only
if its supercharges annihilate it. Lack of full-line square
integrability does not by itself exclude a band-edge state from
this role in a Bloch fiber \cite{DunneFeinberg1998}. 

For a normalized energy eigenstate in the domains of the charge
products, with the cell inner product where appropriate, adjointness
and the closure give
\begin{equation}
 \|\mathcal Q_+\boldsymbol\Psi_E\|^2+
 \|\mathcal Q_-\boldsymbol\Psi_E\|^2=(E-a)^2+C.
 \label{eq:positivity}
\end{equation} 
Such an eigenstate is a SUSY singlet precisely when its energy
is a zero of \((E-a)^2+C\). If no admissible state at these
candidate energies is annihilated by the supercharges, this SUSY
is completely broken in the chosen realization.
Consequently \(C>0\) excludes singlets, whereas \(C\leq0\)
only permits them. 

The same positivity argument constrains the entire spectrum,
including its continuous part, if \(\mathcal H\) is self-adjoint
and mutually adjoint closed supercharges realize 
\eqref{eq:superalgebra} as an equality of closed quadratic forms.
Then \((\mathcal H-a)^2+C\mathbf1\geq0\), and for
\(C=-\beta^2<0\) the spectral theorem excludes
\((a-\beta,a+\beta)\) from \(\sigma(\mathcal H)\).
Formal differential closure alone does not establish this restriction
for an arbitrary self-adjoint realization.

The zero-gap free doublet shows the remaining freedom at fixed
\(C=-\beta^2\). For \(W=w_*>0\),
\begin{equation}
 E_{\rm th}=a+\tfrac12(w_*^2+\beta^2/w_*^2),\qquad
 E_{\rm th}-(a+\beta)=\frac{(w_*^2-\beta)^2}{2w_*^2},\qquad
 q_\pm1=\tfrac12(w_*^2-\beta^2/w_*^2).
 \label{eq:free-threshold}
\end{equation}
The constant band-bottom states are periodic SUSY singlets precisely
when \(w_*^2=\beta\); otherwise both polynomial roots lie below
the spectrum. In the former case there is one singlet in each graded
sector, so the zero-mode index, their count difference, vanishes
despite unbroken SUSY. Domain failure of a formal symmetry is a
separate issue \cite{CorreaDelOlmoPlyushchay2005}.

All three projective classes occur in the elementary family
\(W=\gamma/x\), \(\gamma>0\), on \(x>0\)
\cite{Plyushchay2017}:
\begin{equation}
 2(H_\pm-a)=-\hbar^2D_x^2+
 \frac{(\gamma\mp\hbar)^2-\hbar^2/4}{x^2}-\frac C{\gamma^2}x^2,
 \qquad z=\frac{x^2}{2\gamma},\qquad
 \mathcal U_C=-\frac{3}{2x^2}+\frac{2Cx^2}{\hbar^2\gamma^2}.
 \label{eq:inverse-example}
\end{equation}
The quadratic physical term is confining, absent or inverted as
\(C<0\), \(C=0\) or \(C>0\). These physical evolution
generators should not be confused with the \emph{auxiliary spatial}
generator: the confining example has hyperbolic auxiliary type and
the inverted one elliptic type. Endpoint conditions at the origin
remain part of the physical realization.

\subsection{Fixed points and regular superpotential zeros}
\label{sec:zeros}
The reconstruction is smooth in \(\mathcal U_C,W\), even if the
separate terms \(\Delta\) and \(C/W^2\) diverge. In a regular
developing chart,
\begin{equation}
 W=\frac{P(f)}{f'},\qquad f'\ne0.
 \label{eq:fixed-points}
\end{equation} 
Because \(f'\ne0\), a zero of \(W\) corresponds to a fixed
point of \(P(f)\partial_f\) with the same multiplicity.
 The quadratic
\(P\) permits simple hyperbolic zeros, double parabolic zeros and
no elliptic zeros; the reciprocal chart covers projective infinity.

For a prescribed smooth \(W\), the geometric criterion translates
into explicit cancellation conditions. Put
\begin{equation}
 U_C=\frac{\mathcal N_C}{W^2},\qquad
 \mathcal N_C=\frac{\hbar^2}{2}WW''-\frac{\hbar^2}{4}W'^2-C,
 \qquad \mathcal N_C'=\frac{\hbar^2}{2}WW'''.
 \label{eq:numerator}
\end{equation}
\begin{proposition}[Regular superpotential zeros]
\label{prop:regular-zeros}
Let \(W\) be smooth and nonzero in a punctured neighborhood of
\(x_0\). At a simple zero, the necessary and sufficient condition
for smooth continuation of \(U_C\), and its resulting value, are
\begin{equation}
 C=-\frac{\hbar^2W'(x_0)^2}{4};\qquad
 U_C(x_0)=\frac{\hbar^2W'''(x_0)}{4W'(x_0)}.
 \label{eq:simple-zero}
\end{equation}
At a double zero, the conditions and limiting value are
\begin{equation}
 C=0,\qquad W'''(x_0)=0;\qquad
 U_0(x_0)=\frac{\hbar^2W''''(x_0)}{4W''(x_0)}.
 \label{eq:double-zero}
\end{equation}
Smooth continuation is impossible at a zero of higher order or an
infinitely flat zero.
\end{proposition}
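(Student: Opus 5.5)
The plan is to work with the numerator representation \eqref{eq:numerator}, \(U_C=\mathcal N_C/W^2\), and to Taylor-expand \(\mathcal N_C\) at \(x_0\). Everything follows from \(\mathcal N_C'=\tfrac{\hbar^2}{2}WW'''\). If \(W\) has a zero of order \(m\), then \(W=(x-x_0)^m w(x)\) with \(w\) smooth and \(w(x_0)\ne0\). Smooth continuation of \(U_C\) is then equivalent to \(\mathcal N_C\) vanishing to order at least \(2m\) at \(x_0\). For sufficiency, the quotient of a smooth function vanishing to order \(2m\) by \((x-x_0)^{2m}w^2\) is smooth, by Hadamard's lemma. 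For necessity, \(\mathcal N_C=W^2U_C\) with smooth \(U_C\) already vanishes to order \(2m\).

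\textbf{Simple zero, \(m=1\).} Evaluating at \(x_0\) gives \(\mathcal N_C(x_0)=-\tfrac{\hbar^2}{4}W'(x_0)^2-C\), so vanishing forces \(C=-\hbar^2W'(x_0)^2/4\). Next, \(\mathcal N_C'=\tfrac{\hbar^2}{2}WW'''\) vanishes at \(x_0\) automatically, so order two is reached with no further condition. The limit is then read off by L'Hôpital: \(U_C(x_0)=\mathcal N_C''(x_0)/(2W'(x_0)^2)\). Since \(\mathcal N_C''=\tfrac{\hbar^2}{2}(W'W'''+WW'''')\), this gives \(\tfrac{\hbar^2}{4}W'''(x_0)/W'(x_0)\), which is \eqref{eq:simple-zero}.

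\textbf{Double zero, \(m=2\).} Here \(W(x_0)=W'(x_0)=0\) and \(W''(x_0)\ne0\). Then \(\mathcal N_C(x_0)=-C\), so \(C=0\). At \(x_0\), \(\mathcal N_0'=0\) and \(\mathcal N_0''=\tfrac{\hbar^2}{2}(W'W'''+WW'''')=0\). Also \(\mathcal N_0'''=\tfrac{\hbar^2}{2}(W''W'''+2W'W''''+WW^{(5)})\), which reduces to \(\tfrac{\hbar^2}{2}W''W'''\) at \(x_0\). Vanishing of this term forces \(W'''(x_0)=0\).

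The fourth derivative gives the limiting value. At \(x_0\), the general Leibniz expansion of \((WW''')''' \) leaves only \(\mathcal N_0''''(x_0)=\tfrac{\hbar^2}{2}\cdot 3W''W''''\), using \(W=W'=W'''=0\) there. The Taylor coefficient of \(W^2\) at order four is \(W''(x_0)^2/4\). Hence \(U_0(x_0)=\bigl(\mathcal N_0''''(x_0)/24\bigr)/\bigl(W''(x_0)^2/4\bigr)=\tfrac{\hbar^2W''''(x_0)}{4W''(x_0)}\), matching \eqref{eq:double-zero}.

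\textbf{Higher-order and flat zeros.} For \(m\ge3\) we have \(W(x_0)=W'(x_0)=W''(x_0)=0\), so \(\mathcal N_C(x_0)=-C\) forces \(C=0\). The obstruction then sits at order \(2m-1\). Write \(W\sim c(x-x_0)^m\), so \(W'''\sim c\,m(m-1)(m-2)(x-x_0)^{m-3}\). Then \(\mathcal N_0'\sim\tfrac{\hbar^2}{2}c^2m(m-1)(m-2)(x-x_0)^{2m-3}\), and \(\mathcal N_0\) has leading term proportional to \((x-x_0)^{2m-2}\) with nonzero coefficient because \(m\ge3\). This order is less than \(2m\), so \(U_0\sim\text{const}/(x-x_0)^2\) diverges.

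The same conclusion can be reached geometrically. If \(U_C\) extended smoothly, \(W\) would solve the regular equation \(\Bol_3^{\mathcal U}W=0\) with \(\mathcal U=-2U_C/\hbar^2\) smooth. By Proposition~\ref{prop:reconstruction}, a nontrivial solution of that equation has zeros of order at most two, so a zero of order \(m\ge3\) is excluded. This argument also covers the infinitely flat case, since there \(W=W'=W''=0\) and uniqueness would force \(W\equiv0\), contradicting that \(W\) is nonzero in the punctured neighborhood.

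\textbf{Main obstacle.} The delicate part is the necessity direction, in particular making the order count rigorous at a flat zero. I would use the Bol-equation uniqueness argument as the primary route for all zeros of order three or more. The direct Taylor computation then serves mainly to establish the explicit values.
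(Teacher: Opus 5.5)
Your proposal is correct. It follows the paper closely everywhere except at the double zero.

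\textbf{Where you match the paper.} For the simple zero you use the paper's argument: $\mathcal N_C$ vanishes to second order, and the smooth quadratic factors are then divided. For zeros of order three or more, and in particular for infinitely flat ones, your primary route is also the paper's. A smooth extension of $\mathcal U_C=-2U_C/\hbar^2$ makes $\Bol_3^{\mathcal U_C}W=0$ a regular equation across $x_0$, and $W=W'=W''=0$ there forces $W\equiv0$. You leave one step implicit that the paper states. The stabilizer equation holds on the punctured neighborhood because $I'=2W\Bol_3^{\mathcal U_C}W$, with $I$ constant and $W\neq0$ there; it then extends to $x_0$ by continuity.

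\textbf{Where you differ: the double zero.} You count vanishing orders of $\mathcal N_0$ through $\mathcal N_0'=\tfrac{\hbar^2}{2}WW'''$ and a fourth-order Leibniz expansion. The paper instead writes $W=t^2v(t)$ and uses the signed amplitude $R=t\sqrt{|v|}$, for which $U_0=\hbar^2R''/R$. Regularity then reduces at once to $R''(x_0)=0$, that is $v'(0)=0$, equivalently $W'''(x_0)=0$. The limit is $3\hbar^2v''(0)/(2v(0))=\hbar^2W''''(x_0)/(4W''(x_0))$.

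\textbf{What each route buys.} Your route is more uniform. The single criterion that $\mathcal N_C$ vanish to order $2m$ treats every finite order at once. It also exhibits the explicit divergence $U_0\sim\hbar^2m(m-2)/\bigl(4(x-x_0)^2\bigr)$ for $m\ge3$. The paper's route is shorter and more geometric. In the parabolic class, $|W|=R^2$, where $R$ is a smooth signed solution of $R''+\tfrac12\mathcal U_0R=0$ with a simple zero at $x_0$. The double-zero condition is then simply that this auxiliary equation have a smooth coefficient.

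\textbf{A minor slip.} For $m\ge3$ the obstruction is the nonzero Taylor coefficient of $\mathcal N_0$ at order $2m-2$, not $2m-1$. Your own computation shows this.
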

\begin{proof}
At a simple zero, cancellation requires
\(\mathcal N_C(x_0)=0\); then also \(\mathcal N_C'(x_0)=0\).
Numerator and denominator have smooth quadratic factors, and their
quotient gives \eqref{eq:simple-zero}. At a double zero, boundedness
first requires \(C=0\). Write \(W=t^2v(t)\), \(t=x-x_0\),
and use the signed amplitude \(R=t\sqrt{|v|}\).
Then \(U_0=\hbar^2R''/R\) is smooth exactly when
\(v'(0)=0\), equivalently \(W'''(x_0)=0\), yielding 
\eqref{eq:double-zero}. If \(U_C\) extended smoothly through
\(x_0\), then \(\mathcal U_C=-2U_C/\hbar^2\) would be smooth,
and the stabilizer equation on the punctured neighborhood would
extend by continuity to a regular third-order equation across \(x_0\).
A higher-order or infinitely flat zero would then give
\(W(x_0)=W'(x_0)=W''(x_0)=0\), forcing \(W\equiv0\) locally
by uniqueness and contradicting the hypothesis. 
\end{proof}
Thus a single \(C\) permits several simple zeros only if their
absolute slopes agree, and permits double zeros only in the
parabolic class. These conditions test regularity directly in the
original SUSY parametrization. At an allowed zero, the divergence
of \(z'=1/W\) marks the failure of the straightening coordinate;
the reconstructed operators and a suitable projective ratio chart
remain regular.

\subsection{The distinction from higher orders}
\label{sec:higher}
Projective covariance is not exclusive to second order. For example,
the undeformed charges \((\hbar D_x+W)^n\), \(n>1\), require
\(W'''=0\), the flat stabilizer equation
\cite{KlishevichPlyushchay2001}. Gauging their kernel to
\(\mathcal V_n=\operatorname{span}\{1,x,\ldots,x^{n-1}\}\) gives
\begin{equation}
 e^{\chi/\hbar}(H_{n,-}-a)e^{-\chi/\hbar}
 =-\tfrac12\hbar^2D_x^2+
 \hbar\bigl(WD_x-\tfrac{n-1}{2}W'\bigr),\qquad
 2(H_{n,-}-a)=-\hbar^2D_x^2+W^2-n\hbar W'.
 \label{eq:kernel-n}
\end{equation} 
Here \(\chi'=W\), and conjugation by \(e^{\chi/\hbar}\)
turns the charge into \(\hbar^nD_x^n\), whose kernel is
\(\mathcal V_n\). For quadratic \(W\), the first-order term is an 
\(\mathfrak{sl}(2,\mathbb R)\) generator on \(\mathcal V_n\),
the familiar mechanism of quasi-exact solvability
\cite{Turbiner1988,AoyamaSatoTanaka2001}. At \(n=2\),
\(D_x^2\) vanishes on this space; the Hamiltonian action reduces
to one generator with a quadratic invariant. For \(n>2\) the
kinetic term survives. In the broader polynomial-kernel construction,
an additional compatibility condition is required for \(n\geq3\),
together with the stabilizer equation \eqref{eq:earlier-constraint} 
\cite{Tanaka2003}. The polynomial example illustrates the
second-order distinction; the broader compatibility conditions are
established in the cited literature. The reconstruction above
identifies the geometric data of the second-order family.
A general higher-order theory requires more than one
selected-generator invariant. 

\section{Geometric origin and transport of the correction}
\label{sec:curved}
\subsection{The connection selected by a curved measure}
The affine data of the fictitious factorization have a concrete
realization in dimensional reduction. This realizes the scalar-operator
aspect of the curved-space relation suggested in
ref.~\cite{Plyushchay2017}; earlier connections between nonlinear
SUSY and curved spaces appear in
refs.~\cite{KlishevichPlyushchayRiemann,AnabalonPlyushchay2003}.
On a \(W>0\) interval take
\begin{equation}
 ds^2=dx^2+F^2d\varphi^2,\qquad F=W/\Omega,\qquad
 \chi'=W,\qquad \Omega>0.
 \label{eq:metric}
\end{equation}
The cyclic coordinate need not be periodic. With
\(\Delta_g=F^{-1}\partial_x(F\partial_x)+F^{-2}\partial_\varphi^2\)
the scalar Laplace--Beltrami operator, one has
\(\nabla_\mu\nabla_\nu\chi=W'g_{\mu\nu}\),
\(\Delta_g\chi=2W'\), and \(|d\chi|^2=W^2\).
Consequently the zero- and two-form sectors of Witten's operator
\(\tfrac12\{d_\chi,d_\chi^\dagger\}\),
\(d_\chi=\hbar d+d\chi\wedge\), have scalar expressions
\(\tfrac12[-\hbar^2\Delta_g+W^2\mp2\hbar W']\)
\cite{Witten1982}.

Fixing \(-i\hbar\partial_\varphi=\hbar k_\varphi\) and using
\(\mathsf T_Fu=\sqrt F\,u\) to pass from \(L^2(Fdx)\)
to \(L^2(dx)\) gives
\begin{equation}
 \mathsf T_F[-\hbar^2(D_x^2+F'D_x/F)]\mathsf T_F^{-1}
 =-\hbar^2D_x^2+\Delta(W),\qquad C=-(\hbar\Omega k_\varphi)^2.
 \label{eq:reduction}
\end{equation}
The reduced scalar operators, shifted by \(a\), are precisely
\(H_\mp\): the measure produces \(\Delta\), and cyclic momentum
produces \(-C/W^2\). More specifically,
\begin{equation}
 \Gamma_\zeta=-F'/F,\qquad
 ds^2=W^2(dz^2+\Omega^{-2}d\varphi^2),\qquad
 \Delta=-\frac{\hbar^2}{2}\{z;x\},\qquad dz=dx/W.
 \label{eq:measure-connection}
\end{equation}

For the specified metric and scalar Witten operator, the same
coordinate that makes the metric conformally flat selects the affine
connection in the SUSY factorization. The reduced measure therefore
produces the required Schwarzian correction without an additional
curvature coupling. This realizes the SUSY factorization data
geometrically; it does not assert uniqueness of quantization from
the metric alone.

This Riemannian reduction gives \(C\leq0\). A Lorentzian cyclic
direction reverses the sign at the level of scalar differential
expressions, but needs a separate parent-theory analysis. The full
parent supercharges and their grading, including the Witten and
Clifford--Pauli constructions, will be developed in a subsequent
paper. The present result concerns the scalar partners and their
measure; regularity of the reduced operators need not imply that
the parent metric extends through a zero of \(W\).

\subsection{What a variable mass does and does not fix}
\label{sec:PDM} 
A position-dependent mass fixes the kinetic metric, but leaves a
choice of factor ordering \cite{BravoPlyushchay2016}.
Let \(\mu(x)>0\) be the mass, \(u(x)\) the potential, and
\(b=(2\mu)^{-1/2}\). Using the factorization function \(\zeta\),
quantize 
\(p^2/(2\mu)+u\) as \(H_{b,\zeta}=bK_\zeta b+u\).
The coordinate \(dy=dx/b\) and the unitary amplitude
\((\mathsf T\psi)(y)=\sqrt{b(x(y))}\psi(x(y))\) give
\begin{equation}
 \mathsf TH_{b,\zeta}\mathsf T^{-1}
 =-\hbar^2D_y^2+\hbar^2(\omega_\Phi^2-D_y\omega_\Phi)+u(x(y)),
 \quad \Phi=\sqrt b\,\zeta,\quad \omega_\Phi=D_y\log\Phi.
 \label{eq:PDM-flat}
\end{equation}
The surviving coefficient is the difference of two affine connections:
\begin{equation}
 (\Gamma_\zeta-\Gamma_\mu)\,dx=2\omega_\Phi\,dy,
 \qquad \Gamma_\mu=\tfrac12(\log\mu)'=-b'/b.
 \label{eq:relative}
\end{equation}
Their difference is a one-form because their inhomogeneous
transformation terms cancel. Flattening the kinetic metric therefore
does not in general remove the factorization data: the first-order
factor \(\hbar(D_y+\omega_\Phi)\) becomes a plain derivative
only when the two connections coincide. In the SUSY reduction above,
the measure fixes the factorization connection; a mass profile alone
would not determine it. This is the role of variable mass in the
geometric interpretation of the fictitious transformation.

\subsection{Metamorphosis in the adapted projective coordinate}
\label{sec:stationary-metamorphosis}
The same coordinate \(z'=1/W\) implements coupling-constant
metamorphosis, which exchanges an energy parameter with a coupling
\cite{Plyushchay2017,HietarintaEtAl1984,KalninsMillerPost2010}. 
On a \(W>0\) interval, set \(\varepsilon=E-a\) and
\(\Psi_\pm=\sqrt W\,\widetilde\Psi_\pm(z)\). Multiplying 
the transformed stationary equation by \(2W^2\) gives
\begin{gather}
 [-\hbar^2D_z^2+\mathcal W_\varepsilon^2
       \pm\hbar\partial_z\mathcal W_\varepsilon]
       \widetilde\Psi_\pm=\mathcal E_\varepsilon\widetilde\Psi_\pm,
 \nonumber\\
 \mathcal W_\varepsilon=W^2-\varepsilon,\qquad
 \mathcal E_\varepsilon=\varepsilon^2+C.
 \label{eq:metamorphosis}
\end{gather}
The Schwarzian term cancels the derivative contribution of the
amplitude. The identity \(\mathfrak D_W(\sqrt W\,\phi(z))=
\sqrt W\,\phi_z\) reduces the quadratic charges on stationary
solutions to
\begin{equation}
 q_+\Psi_-=\sqrt W(\hbar D_z+\mathcal W_\varepsilon)
       \widetilde\Psi_-,\qquad
 q_-\Psi_+=\sqrt W(-\hbar D_z+\mathcal W_\varepsilon)
       \widetilde\Psi_+.
 \label{eq:stationary-charges}
\end{equation} 
Here \(W\) is evaluated at \(x=x(z)\). For each fixed
\(\varepsilon\), these are first-order SUSY partners:
the original energy parameter enters their superpotential, while
the original coupling \(C\) enters the transformed energy
\(\mathcal E_\varepsilon\).
 For \(W=\gamma/x\), one obtains
\(\mathcal W_\varepsilon=\gamma/(2z)-\varepsilon\), the Coulomb
superpotential, with energy relative to its asymptotic threshold
\(\mathcal E_\varepsilon-\varepsilon^2=C\).

This order reduction uses the stationary equation. It is not an
operator similarity or an exchange of two coexisting physical SUSYs.
The amplitude here is \(\sqrt W\), whereas a unitary coordinate
map from \(L^2(dz)\) to \(L^2(dx)\) uses \(W^{-1/2}\).
Charge-annihilation relations are transported, but normalizability
and boundary conditions must be checked anew. In particular, the
map is restricted to intervals on which \(W\) has fixed sign;
a zero of \(W\) can become an infinite endpoint in \(z\).

\section{The Lam\'e pair: projective geometry and integrability}
\label{sec:integrable} 
The local reconstruction does not require a physical Lax--Novikov
integral. The one-gap Lam\'e pair tests how such an integral relates
to the SUSY auxiliary geometry: it has first- and second-order SUSYs
and a third-order Lax--Novikov integral, yet its auxiliary third
Bol operator is not that integral. 
We use this established system
\cite{CorreaEtAl2008,CorreaJakubskyPlyushchay2008,PlyushchayArancibiaNieto2011}
to exhibit the geometric distinction and its reflectionless limit.

\subsection{Reconstruction and half-period inversion}
\label{sec:lame}
We take explicitly \(m=1,l=0\) in the associated Lam\'e family:
the intertwiner orders \(m-l\) and \(m+l+1\) are then one and
two. Set \(0<k<1\), \(r=k'=\sqrt{1-k^2}\), 
\(K=K(k)\), the complete elliptic integral of the first kind,
and \(\Sigma=1+r^2\). The Jacobi functions 
\(\operatorname{sn}x,\operatorname{cn}x,\operatorname{dn}x\)
all have modulus \(k\), and the inverse length scale is set to one.
The pair is
\begin{equation}
 L_-=-D_x^2+\Sigma-2\operatorname{dn}^2x,\qquad
 L_+=-D_x^2+\Sigma-\frac{2r^2}{\operatorname{dn}^2x},\qquad
 H_\pm=\frac{\hbar^2}{2}L_\pm.
 \label{eq:lame-H}
\end{equation}
The identity \(\operatorname{dn}(x+K)=r/\operatorname{dn}x\)
interchanges the functional factors
\(\operatorname{dn}^2x\) and \(r^2/\operatorname{dn}^2x\).
For this pair it translates one potential into the other.
Its projective reconstruction data are
\begin{align}
 W&=-\frac{\hbar}{2}D_x\log\operatorname{dn}x
   =\frac{\hbar k^2\operatorname{sn}x\operatorname{cn}x}
          {2\operatorname{dn}x},\qquad
 a=\frac{\hbar^2\Sigma}{4},\qquad C=-\frac{\hbar^4k^4}{16},\nonumber\\
 \mathcal U_C&=\frac32\left(\operatorname{dn}^2x+
                       \frac{r^2}{\operatorname{dn}^2x}\right)
                       -\frac{\Sigma}{2}.
 \label{eq:lame-data}
\end{align}
An auxiliary basis and developing coordinate are
\begin{equation}
 \psi_1=\frac{\sqrt r\,\operatorname{sn}x}{\sqrt{\operatorname{dn}x}},
 \qquad \psi_2=\frac{\operatorname{cn}x}{\sqrt{\operatorname{dn}x}},
 \qquad f=\sqrt r\,\frac{\operatorname{sn}x}{\operatorname{cn}x},
 \qquad W\partial_x=\frac{\hbar k^2}{2}f\partial_f.
 \label{eq:lame-projective}
\end{equation}
Thus \(\{f;x\}=\mathcal U_C\) and the selected generator is
hyperbolic. The zeros of \(W\) have slopes
\(\pm\hbar k^2/2\), satisfying \eqref{eq:simple-zero}; the
auxiliary connection and SUSY operators remain smooth there.
The zero and pole of the ratio chart represent the two projective
fixed points, with a reciprocal chart at the pole.

The half-period translation has the projective action
\begin{equation}
 f(x+K)=-\frac1{f(x)},\qquad
 \mathsf S=\begin{pmatrix}0&-1\\1&0\end{pmatrix},\qquad
 \mathsf S^2=-\mathbf1.
 \label{eq:lame-inversion}
\end{equation}
It preserves \(\mathcal U_C\), reverses \(W\), and exchanges
the physical partners. This discrete elliptic transformation has
order two in \(\mathrm{PSL}(2,\mathbb R)\); it interchanges the
fixed points of the selected hyperbolic field. Its integer matrix
acts on the solution ratio and does not by itself implement a
modular transformation of the elliptic period ratio. The example
therefore separates the connection, its selected infinitesimal
generator, and a discrete symmetry exchanging the partners.

\subsection{Auxiliary symmetry versus physical integrability}
\label{sec:GD}
The physical intertwiners needed for the comparison are
\begin{equation}
 A=D_x-(\log\operatorname{dn}x)',\qquad
 Y=D_x^2-(\log\operatorname{dn}x)'D_x+\operatorname{dn}^2x,
 \qquad q_+=\frac{\hbar^2}{2}Y.
 \label{eq:lame-charges}
\end{equation}
They obey \(AL_-=L_+A\), \(YL_-=L_+Y\), and
\begin{equation}
 A^\dagger A=L_-,\qquad
 Y^\dagger Y=(L_--r^2)(L_--1).
 \label{eq:lame-closures}
\end{equation}
The plus-sector identities follow by translation. Placing these
maps in off-diagonal matrix entries gives the two SUSY orders;
their mixed products give even integrals of differential order three.

To compare the auxiliary Bol operator with the physical Lax integral,
recall the two distinct third-order operators in KdV. The GD
operator defines the second Poisson bracket on the space of
potentials, while the Lax evolution operator acts on Schr\"odinger
wavefunctions. For \(L=-D_x^2+v\), take these operators as
\begin{equation}
 \mathcal J_v=D_x^3-4vD_x-2v',\qquad
 B_0=D_x^3-\frac32vD_x-\frac34v'.
 \label{eq:GD-Lax-operators}
\end{equation}
Their different relative coefficients reflect these distinct roles;
see section~4.1 of ref.~\cite{Zuber1993}. In particular,
\(\mathcal J_v\) maps functional gradients to variations of \(v\).
In this normalization \(4[B_0,L]=\mathcal J_vv=v'''-6vv'\),
with multiplication understood on the right. The identification
\(\mathcal J_v=\Bol_3^{-2v}\) gives the geometric meaning:
\(\mathcal J_v\xi\) is the coadjoint variation of the projective
coefficient \(\mathcal U=-2v\) in \eqref{eq:variation}.
The third derivative is the contribution of the Gelfand--Fuchs
cocycle to the Virasoro coadjoint action \cite{Ovsienko2006}.

For a general third Bol operator, coefficient comparison gives
\begin{equation}
 [\Bol_3^{\mathcal U},-D_x^2+v]=0
 \quad\Longleftrightarrow\quad
 \mathcal U=-\frac34v+c_0,\qquad v'''-6vv'+8c_0v'=0,
 \label{eq:stationary-KdV}
\end{equation}
with constant \(c_0\).
The commuting operator is then
\(B=B_0+2c_0D_x\).

The second equation is the stationary Korteweg--de Vries condition.
Commutation imposes compatibility with the physical potential, in
addition to the auxiliary stabilizer equation
\(\Bol_3^{\mathcal U_C}W=0\).

The Bol--Lax comparison here uses differential expressions in the
fixed physical coordinate \(x\). Geometrically, \(\Bol_3\) maps
density weights \(-1\) to \(2\), while the Lax operator acts on
wavefunctions. Equality of their coefficient expressions uses the
coordinate identification, as in section~\ref{sec:reconstruction};
their respective density actions must be distinguished under
reparametrization. In particular, the relation
\(\mathcal U=-3v/4+c_0\) specifies the commuting expression in
\(x\), rather than the projective coefficient of the physical
Schr\"odinger equation.

For \(v=\Sigma-2\operatorname{dn}^2x\), it gives
\begin{align}
 \mathcal U_{\mathrm L,-}&=\tfrac32\operatorname{dn}^2x-\tfrac12\Sigma,
 \qquad B_-:=\Bol_3^{\mathcal U_{\mathrm L,-}}=-A^\dagger Y,
 \qquad \mathscr P_-=i\hbar^3B_-,\nonumber\\
 [\mathscr P_-,L_-]&=0,\qquad
 \mathscr P_-^2=\hbar^6L_-(L_--r^2)(L_--1),\qquad c_0=\Sigma/4.
 \label{eq:lame-Lax}
\end{align}
Here \(\mathscr P_-\) is the formally Hermitian Lax--Novikov
integral; its partner is obtained by translation.
The decisive comparison is
\begin{equation}
 \mathcal U_C-\mathcal U_{\mathrm L,-}
 =\frac{3r^2}{2\operatorname{dn}^2x}.
 \label{eq:lame-distinction}
\end{equation}
At finite modulus this difference is nonconstant. Hence
\(\Bol_3^{\mathcal U_C}\) cannot be the commuting third-order
integral: the auxiliary projective symmetry of second-order SUSY
and the physical Lax symmetry use different coefficients.

The physical stationary equation determines an energy-dependent
projective coefficient, \(\mathcal U_\lambda=-2(v-\lambda)\), 
with \(\lambda=2E/\hbar^2\). Products of its solutions, and
their linear combinations, satisfy the Gelfand--Dikii equation.
Writing such a combination as \(\mathcal R(x,\lambda)\), one has 
\cite{Brezhnev2008Integrability,GelfandDikii1975},
\begin{equation}
 \mathcal R'''-4(v-\lambda)\mathcal R'-2v'\mathcal R=0,
 \qquad
 \nu^2=-\tfrac12\mathcal R\mathcal R''+
       \tfrac14(\mathcal R')^2+(v-\lambda)\mathcal R^2.
 \label{eq:GD}
\end{equation}

The second expression in \eqref{eq:GD} is an \(x\)-independent
first integral, denoted by \(\nu^2\); it need not be positive.
The differential operator in the first equation is
\(\mathcal J_\lambda=\mathcal J_v+4\lambda D_x
=\Bol_3^{\mathcal U_\lambda}\). 
Thus \(\mathcal R\partial_x\)
preserves the physical projective connection. For the Lam\'e potential,
\begin{equation}
 \mathcal R=\lambda-\operatorname{dn}^2x,\qquad
 \nu^2=-\lambda(\lambda-r^2)(\lambda-1).
 \label{eq:lame-GD}
\end{equation}
It encodes the spectral curve of \(\mathscr P_-\), whereas
\(\mathcal U_C\) reconstructs the SUSY pair at fixed \(C\).

The solution-product description also supplies a first-order
symmetry of the spectral equation; see section~3 of
ref.~\cite{Brezhnev2008Integrability}. In our notation its action is
\(\mathfrak D_{\mathcal R}=\mathcal R D_x-\mathcal R'/2\),
the Lie derivative on weight-\(-1/2\) densities, as in
\eqref{eq:DW}. Substitution of \eqref{eq:lame-GD} gives

\begin{equation}
 B_-=-\mathfrak D_{\mathcal R}-D_x\circ(L_--\lambda),\qquad 
 (L_--\lambda)\psi=0\ \Longrightarrow\
 B_-\psi=-\mathfrak D_{\mathcal R}\psi.
 \label{eq:Lax-projective-action}
\end{equation}

This established finite-gap mechanism permits a precise comparison
with \eqref{eq:kernel-action}: the physical Lax action is represented
by the stabilizer of \(\mathcal U_\lambda\), while the Hamiltonian
action on the charge kernel is represented by the stabilizer of
\(\mathcal U_C\). The coefficients serve three distinct purposes:
\(\mathcal U_C\), together with \(W\), reconstructs the SUSY pair;
\(\mathcal U_\lambda\) governs physical solution products; and
\(\mathcal U_{\mathrm L,-}\) expresses the commuting Lax operator
in Bol form in \(x\). The comparison separates the auxiliary
geometry selected by SUSY from the established spectral construction.

The band-edge data illustrate the separate role of admissibility.
The spectrum of \(L_-\) is \([0,r^2]\cup[1,\infty)\).
The periodic singlet \(\operatorname{dn}x\) at zero is killed
by \(A\); the antiperiodic singlets \(\operatorname{cn}x\) and
\(\operatorname{sn}x\) at \(r^2\) and one are killed by \(Y\).
Thus each SUSY order has singlets in the full Bloch spectrum, but
none is common to both orders at finite modulus. Restricting to the
periodic or antiperiodic fiber selects which order retains zero modes,
without changing \(\mathcal U_C\) or \(C\).

\subsection{Why the reflectionless limit is special}
\label{sec:soliton-limit}
At fixed \(x\), the limit \(k\to1\) gives
\(r\to0\), \(K\to\infty\), and
\(\operatorname{dn}x\to\operatorname{sech}x\). The minus
subsystem becomes a reflectionless P\"oschl--Teller potential and 
the plus subsystem becomes free with a constant threshold shift, \(L_+\to-D_x^2+1\). In this limit 
\begin{equation}
 L_-=-D_x^2+1-2\operatorname{sech}^2x,\qquad
 \mathcal U_C=\mathcal U_{\mathrm L,-}
 =\tfrac12(3\operatorname{sech}^2x-1),\qquad
 \mathscr P_-=i\hbar^3\Bol_3^{\mathcal U_C}.
 \label{eq:Lax-Bol}
\end{equation}
The ratio in \eqref{eq:lame-projective} has a degenerate normalization;
rescaling it by \(r^{-1/2}\), which leaves its Schwarzian unchanged,
gives the regular limit \(f=\sinh x\). Thus the coincidence in
\eqref{eq:Lax-Bol} is a limit of the projective coefficients,
not a consequence of a singular choice of ratio chart.

The coincidence is an equality of differential expressions in
\(x\), with the density interpretation specified above. The physical
GD operator \(\mathcal J_\lambda\) remains distinct in this limit.

The surviving Darboux maps are \(A=D_x+\tanh x\) and
\(Y=D_xA\). Their mixed product still yields
\(B_-=-A^\dagger Y\), while the spectral relation degenerates to
\begin{equation}
 \mathscr P_-^2=\hbar^6L_-^2(L_--1).
 \label{eq:soliton-spectral}
\end{equation}
The auxiliary pair
\(\psi_2=(\cosh x)^{-1/2}\),
\(\psi_1=\sinh x/\sqrt{\cosh x}\) now has a symmetric square
equal to the Lax kernel. An equivalent basis is
\(\operatorname{sech}x,\tanh x,\cosh x\): respectively a bound
state at zero, a bounded threshold state at \(L_-=1\), and an
associated function satisfying
\(L_-\cosh x=-2\operatorname{sech}x\)
\cite{CorreaPlyushchay2007}.
The bound state is annihilated by both SUSY orders, and the threshold
states by the second-order maps. These different spectral roles
explain the double and simple roots in \eqref{eq:soliton-spectral}.
The auxiliary Bol operator has become a physical integral precisely
because the extra term in \eqref{eq:lame-distinction} has vanished.

\section{Discussion and outlook}
\label{sec:discussion}

With the physical coordinate, unit-mass kinetic normalization,
\(\hbar\), and energy origin fixed, a projective connection and a
selected stabilizing field determine all local differential operators
of the established second-order SUSY family. The connection fixes
\(\ker_{\rm loc}\mathcal L_C\); the field fixes \(\chi'=W\),
and hence the dressed local charge kernels
\(\ker_{\rm loc}q_\pm=e^{\mp\chi/\hbar}
\ker_{\rm loc}\mathcal L_C\), the partner splitting, and the
Hamiltonian action on those kernels.  
The selected field's quadratic invariant fixes the central parameter
of the nonlinear superalgebra. 
 This correspondence organizes the known intertwining
constraints into geometric data with explicit spectral and
regularity consequences.

The conjugacy invariant determines candidate singlet energies;
compatible domains and boundary conditions decide their admissibility.
For periodic systems this includes the Bloch fibers and band-edge
singlets. Proposition~\ref{prop:regular-zeros} gives a direct test
of smoothness in the original superpotential parametrization:
simple zeros must have a common absolute slope fixed by \(C<0\),
while double zeros require \(C=0\) and a vanishing third derivative.
The projective reconstruction remains regular when the adapted
coordinate fails at such a fixed point.

The affine factorization gives the classically invisible insertion
a quantum interpretation: its covariant factors generate the
Schwarzian correction required by SUSY. The curved scalar reduction
realizes these data through the measure, and the variable-mass
analysis shows why flattening the kinetic metric does not generally
erase them. These constructions identify the geometric content of
the factorization prescription while keeping its additional quantum
data explicit.

The Lam\'e pair relates the SUSY reconstruction to the established
Gelfand--Dikii--Virasoro structure. Its half-period translation acts
as a projective inversion exchanging the partners.  
The first-order representation of the
Lax action on physical solutions, \eqref{eq:Lax-projective-action}, 
uses a spectral stabilizer and can be compared directly with the
auxiliary stabilizer acting on the charge kernel. The stationary
KdV condition \eqref{eq:stationary-KdV} specifies when a Bol
expression in the physical coordinate is also a commuting integral.
In the reflectionless limit, the SUSY auxiliary Bol expression
coincides with the Lax integral up to normalization, and the
symmetric square of the auxiliary second-order solution space
is the Lax kernel.
 The physical GD
spectral connection remains distinct. The contribution of this
comparison is to identify the additional SUSY auxiliary geometry
and its compatibility with physical integrability.

Several extensions remain. The local reconstruction should be combined
with global monodromy, the change of an auxiliary solution basis
after continuation around a closed loop, and with compatible physical
domains. This is particularly relevant when regular zeros obstruct a
single adapted coordinate or when metamorphosis changes endpoint
conditions. The complete parent-supercharge construction, including
its grading and the Witten and Clifford--Pauli realizations, will be
developed in a subsequent paper. Higher-order SUSY requires additional
data beyond a selected quadratic invariant; its relation to invariant
polynomial flags and exceptional orthogonal polynomials
\cite{GomezUllateKamranMilson2011,InzunzaPlyushchay2019Hidden}
offers a separate extension of the present framework.

\section*{Acknowledgments}
The work was partially supported by the FONDECYT Project 1242046.

\appendix
\section{Operator checks}
\label{app:proofs}
\subsection{The correction prescription}
\label{app:correction}
Keep \(H_\pm=H_\pm^{(0)}+\delta V/2\). A general second-order
intertwiner has constant leading coefficient by the \(D_x^3\)
condition. Fix it to \(\hbar^2/2\) and write
\(q_+=q_+^{(0)}+A(x)D_x+B(x)\). The next two coefficients give
\begin{equation}
 A'=0,\qquad B'=-\tfrac12\delta V'+2AW'/\hbar,
 \qquad A=A_0,\quad B=-\tfrac12\delta V+2A_0W/\hbar+B_0.
 \label{eq:correction-constants}
\end{equation}
Retaining the specified coefficient \(\hbar W\) of \(D_x\)
sets \(A_0=0\). Absorbing \(2B_0\) into \(\delta V\)
shifts the common energy origin by \(B_0\); fixing \(a\) sets
\(B_0=0\). The scalar remainder is then 
\(W\delta V'+2W'\delta V-\hbar^2W'''/2=0\).
Multiplication by \(W\) and integration give
\[
 (W^2\delta V)'=\frac{\hbar^2}{2}WW''',\qquad
 W^2\delta V=\frac{\hbar^2}{2}WW''
             -\frac{\hbar^2}{4}W'^2+\kappa.
\]
Thus \(\delta V=\Delta+\kappa/W^2\); the integration constant
is absorbed by replacing \(C\) with \(C-\kappa\).
This explains the scope of the prescription 
\cite{Plyushchay2017}: additional quantum shifts of \(W\) or
\(a\) would be different choices of quantum data.

\subsection{Intertwining and quadratic closure}
\label{app:closure}
For \(C=-\beta^2<0\), introduce
\begin{equation}
 B_1=\hbar D_x+W-\frac{\hbar W'}{2W}+\frac\beta W,
 \qquad B_2=\hbar D_x+W+\frac{\hbar W'}{2W}-\frac\beta W.
\end{equation}
Direct multiplication yields
\begin{align}
 B_2B_1&=2q_+,\qquad
 B_1^\dagger B_1=2(H_--a+\beta),\qquad
 B_2B_2^\dagger=2(H_+-a-\beta),\nonumber\\
 B_1B_1^\dagger-B_2^\dagger B_2&=4\beta.
 \label{eq:B-identities}
\end{align}
The common intermediate operator
\(H_{\rm int}=B_1B_1^\dagger/2+a-\beta
=B_2^\dagger B_2/2+a+\beta\) proves intertwining by composition.
It also gives
\begin{equation}
 q_-q_+=(H_--a)^2-\beta^2,\qquad
 q_+q_-=(H_+-a)^2-\beta^2.
\end{equation}
For fixed nonvanishing \(W\), both intertwining and closure
remainders are polynomial in \(C\). Their vanishing for every
negative \(C\) therefore proves the identities for all real
\(C\). This extends differential identities, not physical domains;
it does not require treating complex factors as real adjoints.

\end{document}